\documentclass[11pt]{scrartcl}
\usepackage[a4paper, total={16cm, 24cm}]{geometry}

\usepackage{multicol}

\usepackage[small]{caption}
\usepackage{subcaption}

\usepackage[table,dvipsnames]{xcolor}
\usepackage{todonotes}
\usepackage{pdflscape}
\usepackage{adjustbox}

\usepackage{natbib}
\usepackage[pagebackref]{hyperref}
\hypersetup{
	pdfencoding=auto,
	psdextra,
	colorlinks=true,
	citecolor=green!50!black,
	linkcolor=red!60!black,
	urlcolor=purple!70!black
}

\renewcommand{\epsilon}{\varepsilon}

\usepackage{authblk}

\author[1]{Benjamin Cookson}
\author[2]{Eva Deltl}
\author[3]{Yeeseok Oh}
\affil[1]{University of Toronto, \texttt{bcookson@cs.toronto.edu}}
\affil[2]{TU Clausthal, \texttt{eva.deltl@tu-clausthal.de}}
\affil[3]{The University of Tokyo, \texttt{oh@g.ecc.u-tokyo.ac.jp}}

\usepackage{url}
\usepackage{microtype}
\usepackage[utf8]{inputenc}
\usepackage{graphicx}
\usepackage{amsmath}
\usepackage{amsthm}
\usepackage{amssymb}
\usepackage{mathtools}
\usepackage{nicefrac}
\usepackage{booktabs}
\usepackage{enumitem}
\usepackage{tabulary}
\usepackage{tabularx}
\usepackage{dsfont}
\usepackage{algorithm}
\usepackage{algpseudocodex}

\usepackage{wrapfig}

\usepackage[nameinlink]{cleveref}
\usepackage{stmaryrd}
\usepackage{thm-restate}
\usepackage{tikz}
\usetikzlibrary{arrows.meta}
\usepackage{tcolorbox}

\newtheoremstyle{sfthm}
	{\topsep}
	{\topsep}
	{\itshape}
	{}
	{\sffamily\bfseries}
	{}
	{.5em}
	{}
\theoremstyle{sfthm}
\newtheorem{theorem}{Theorem}[section]
\newtheorem{lemma}[theorem]{Lemma}

\newtheoremstyle{sfdef}
{\topsep}
{\topsep}
{}
{}
{\sffamily\bfseries}
{}
{.5em}
{}
\theoremstyle{sfdef}

\title{Optimally Selecting Representative Agents from a Metric Space}

\date{}

\begin{document}

\maketitle

\begin{abstract}
\begin{center}
	\textbf{\textsf{Abstract}} \smallskip
\end{center}
This paper studies the problem of proportionally fair clustering, where the goal is to select $k$ ``centers'' from a metric space that fairly represent a set of agents who also lie in the metric space. Specifically, we focus on finding a clustering satisfying a fairness property known as the Droop core. In the practical special case in which the set of feasible center locations contains every agent location, the previous best-known result guaranteed a $(1 + \sqrt{2})$-approximation of the Droop core, while the best-known lower bound was $2$. In this paper, we show that this lower bound is tight and that a clustering in the $2$-Droop core always exists. Further, we show that such a clustering can be achieved by only selecting centers from locations in the metric space where an agent resides. We establish this using Scarf's theorem guaranteeing a nonempty core for balanced non-transferable utility games. This result has several interesting corollaries. Most notably, it resolves the $\beta$-plurality problem of \citet{aronov2020beta} for general metric spaces. The main result of this paper was generated by $\mathtt{ChatGPT}$-$\mathtt{5.6}$-$\mathtt{Sol}$ through a series of interactions with the authors. The authors of this paper verified the generated proof and rewrote it for clarity.
\end{abstract}

\section{Introduction}

One of the most fundamental problems in social choice theory is how to select a committee of $k$ candidates from a set $C$ to best represent a set of agents $N$ who have preferences over those candidates. These preferences can take many forms, such as ordinal rankings, cardinal values, or approval sets. In one interesting preference model, the agents and candidates both reside at points in some underlying metric space, and an agent's preference for a candidate depends on the metric distance between them.

The problem of selecting a fair committee under these metric preferences is typically known as a \emph{fair clustering} problem. Selecting candidates from a metric space so that every agent has a nearby candidate closely resembles the traditional problem of clustering from machine learning, where a number of \emph{centers} must be chosen from a metric space to minimize an objective function based on the distances between those centers and various data points in the space. \citet{ChenEtal2019} originally introduced this problem and the clustering terminology, referring to the candidates in the metric space as centers. They were the first to observe that selecting candidates in this setting using classical clustering objectives such as $k$-means and $k$-medians did not produce any compelling fairness guarantees.

Specifically, \citet{ChenEtal2019} were interested in finding a set of $k$ centers that was in \emph{the $\alpha$-core} with respect to the agents. A set of candidates $X \subseteq C$ is in the $\alpha$-core with respect to a set of agents $N$ if no group of agents that is big enough to ``deserve'' at least one cluster center to represent them (\citet{ChenEtal2019} set this group size to $\nicefrac{|N|}{k}$) can come together and select some new center that is $\alpha$ times closer to each of them than any center in $X$. A clustering satisfying this notion is said to achieve a form of proportional representation. No large group of agents can claim that they are \emph{under-represented} by the chosen clustering by providing a mutually preferred option.


In their original paper, \citet{ChenEtal2019} showed that every proportionally fair clustering instance admits a $(1 + \sqrt{2})$-core clustering. They also provided a lower bound showing that no algorithm can achieve a better approximation to the core than $2$. Recently, \citet{cookson2026improved} improved the lower bound to $2.1508$, removing the possibility that a clustering algorithm achieving a $2$-approximation of the core may exist.


However, this did not eliminate such a possibility for many of the most practical applications of proportionally fair clustering. In many prominent applications, it makes sense to assume that the set of valid centers is a \emph{superset} of the set of points where an agent resides. This is not the case in the lower-bound instances found by \citet{cookson2026improved}. This restriction captures many natural problems that use the proportionally fair clustering model. For instance, in many problems, the set of valid cluster centers is \emph{exactly} the set of locations where agents reside, modeling settings in which a small subset of a population of agents must be selected to fairly represent the entire population. Examples include the metric sortition problem studied by \citet{ebadian2024boosting} and the problem of selecting representative slates of user-generated opinions studied by \citet{kraiczy2026social}.
More generally, any problem in which every point in the underlying metric space is a valid center will have this property.

We refer to this special case as the ``$N \subseteq C$'' case, indicating that any agent (or, more specifically, the point where that agent resides) can be selected as a center. In this case, the best-known upper and lower bounds were identical to those originally found for the general case, $[2, 1 + \sqrt{2}]$, with the upper bound achieved by the same Greedy Capture algorithm that works in the general case and the lower bound coming from an instance where $N = C$ (the set of valid centers was exactly the set of points where agents reside) \citep{Trinh2024, cookson2026improved}.

\subsection{Our Contributions}
We settle this question for the $N\subseteq C$ case by showing that the lower bound of $2$ is tight. In fact, we show that such a clustering can always be chosen using \emph{only} points where the agents reside. Further, we show that this clustering will be in the $2$-core even when every group of agents of size greater than $\frac{|N|}{k+1}$ is able to deviate (known as the Droop quota), a strictly stronger guarantee than the original definition of the core, which only allows groups of size at least $\frac{|N|}{k}$ to deviate (the Hare quota).

\begin{theorem}\label{thm:2-Droop-core}
For every clustering instance with $N \subseteq C$, a clustering $X \subseteq N$ in the $2$-Droop core exists.
\end{theorem}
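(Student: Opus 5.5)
The plan is to obtain the clustering from the non-emptiness of the core of a suitably defined balanced game, via Scarf's theorem (in its lemma form: every $A\in\mathbb{R}_{\ge 0}^{N\times \mathcal{T}}$ with row-wise strict orders over columns admits a basis of $Ax=\mathbf{1}$, $x\ge 0$, that is \emph{dominating}). Write $n=|N|$ and $q=\lfloor n/(k+1)\rfloor+1$, the smallest group size allowed to deviate under the Droop quota.

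\textbf{The matrix.} Columns are the ``blocking-type'' coalitions $T$ of the following form. Fix an agent location $j\in N$ and a radius $r$. Let $T$ be a set of $q$ agents, each within distance $r$ of $j$, closest-first with a fixed tie-breaking. The entry $A_{iT}$ is $1$ if $i\in T$ and $0$ otherwise. We also add singleton slack columns $\{i\}$ so that $Ax=\mathbf 1$ is feasible and bounded.

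\textbf{The preferences.} Each agent $i$ ranks the columns containing it by $d(i,j_T)$, closer being better. Slack columns are ranked worst. Ties are broken consistently, e.g.\ lexicographically by $(d(i,j_T),\text{index})$.

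\textbf{Counting.} Scarf's lemma yields a dominating feasible basic solution $x$. Every non-slack column has $|T|=q>n/(k+1)$, and $\sum_T x_T|T|\le n$. Hence the total weight on non-slack columns is $<k+1$. Since $x$ is a basic solution, I would expect to extract from its support at most $k$ centers.

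\textbf{Extraction.} Process the non-slack support columns in increasing order of radius. Open the center $j_T$ of a column $T$ unless $T$ intersects a column already opened. Because every agent's fractional load is at most $1$, a weight-counting argument shows fewer than $k+1$ columns are opened; I would charge each opened column against the weight of the columns it blocks.

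\textbf{Stability.} Suppose a coalition $S$ with $|S|\ge q$ and a location $c\in C$ satisfy $2d(i,c)<d(i,X)$ for all $i\in S$. Since $N\subseteq C$ and we only open agent locations, the $N\subseteq C$ hypothesis is used here: replace $c$ by the agent $j\in S$ closest to $c$. This makes $S$ correspond to a column $T$ in the matrix, up to taking $q$ of its members, and the cost of this replacement is the factor $2$ via the triangle inequality. Domination of the Scarf basis then gives some $i\in T$ such that some support column $T'\ni i$ satisfies $d(i,j_{T'})\le d(i,j_T)$. If $T'$ was opened, $i$ is served within $d(i,j_T)$. If it was not opened, $T'$ meets an opened column $T''$ of no larger radius. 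The triangle inequality through a common agent then bounds $d(i,X)$ by a small multiple of $d(i,j_T)$.

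\textbf{Main obstacle.} This last chain of triangle inequalities must be tuned to produce exactly $2$ rather than $3$ or $1+\sqrt2$. That may require redefining columns by agent-centered balls whose radius is measured from the deviating agents' perspective, e.g.\ using $d(i,j)\le 2d(i,c)$ directly. It may also require letting the preference of $i$ be by $d(i,j_T)$ while the extraction is ordered by the radius of $T$. I would first try to make the deviation witness itself a column, so that domination directly gives $d(i,X)\le 2d(i,c)$ for some $i\in S$, contradicting the assumed strict inequality. The integral-selection step, bounding the number of opened centers by $k$, is the second delicate point.
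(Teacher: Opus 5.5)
Your proposal leaves the decisive step open (``a small multiple''), and the architecture you chose cannot close it at factor $2$. There are three problems.

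\textbf{The factor $2$ is spent too early.} You use up the entire factor before Scarf's lemma is applied. Replacing the deviation point $c$ by the agent $j\in S$ nearest to it gives only $d(i,j)\le 2d(i,c)$. To reach a contradiction you would then need some $i\in S$ with $\ell_i(X)\le d(i,j)$, i.e., an integral clustering that is \emph{exactly} Droop-core stable against deviations to agent locations. No such clustering exists in general: the known lower bound of $2$ is realized by an instance with $N=C$, where every deviation is to an agent location. So any loss $\gamma>1$ in your extraction compounds to $2\gamma>2$, and a loss of $\gamma=1$ is impossible.

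\textbf{The extraction chain gives no bound.} Agents rank columns by $d(i,j_T)$. But passing from an unopened $T'\ni i$ to an opened $T''$ through a common agent $a$ gives only $d(i,j_{T''})\le d(i,j_{T'})+d(j_{T'},a)+d(a,j_{T''})$. The radii of $T'$ and $T''$ about their own centers are unrelated to $d(i,j_T)$.

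\textbf{The columns do not represent $S$.} Since $T$ is the set of $q$ agents nearest to $j$, it need not lie inside $S$. In fact it does not depend on $r$ at all, so there are only $n$ non-slack columns. The dominating row may therefore be an agent outside the deviating coalition.

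\textbf{The missing idea.} Leave the deviation point $c\in C$ arbitrary, and put the whole factor $2$ into the rounding, measured by an \emph{agent-centered} radius. The paper proceeds as follows.
\begin{enumerate}
\item It works with fractional clusterings $\mathbf{x}\in\mathbb{R}^C_{\ge 0}$ of total mass at most $1$, and lets a coalition $S$ deviate with any $\mathbf{y}$ of mass at most $w(S)$.
\item It sets $R_i(\mathbf{x})$ to be the smallest radius of a ball around $i$ holding mass $\delta$, where $\delta$ is the least coalition weight exceeding $\frac{1}{k+1}$.
\item Balancedness is immediate: a $\lambda$-mixture of deviations has mass at most $1$ and still puts mass $\delta$ in each relevant ball. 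So Scarf's theorem gives an exactly stable $\mathbf{x}^*$.
\item The rounding opens \emph{agents} $i$ themselves, greedily in increasing order of $R_i(\mathbf{x}^*)$, whenever the ball $B_i$ is disjoint from the balls of already opened agents.
\item Disjoint balls each carry mass at least $\delta$, so $|X|\le 1/\delta<k+1$. An unopened $i$ meets some opened $j$ with $R_j\le R_i$, so $\ell_i(X)\le R_i+R_j\le 2R_i(\mathbf{x}^*)$.
\item A blocking pair $(S,c)$ with $2d(i,c)<\ell_i(X)$ then yields the deviation placing mass $\delta$ at $c$. It has $R_i=d(i,c)<R_i(\mathbf{x}^*)$ for all $i\in S$, a contradiction.
\end{enumerate}
Thus $N\subseteq C$ is used only to open ball centers, never to relocate $c$.

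Your matrix formulation can be repaired the same way:
\begin{itemize}
\item index columns by pairs $(T,c)$ with $|T|=q$ and $c\in C$ arbitrary;
\item let $\rho_i$ be the largest $d(i,c')$ over support columns $(T',c')\ni i$;
\item round with balls of radius $\rho_i$ around agents, counting via the column mass these balls contain.
\end{itemize}
For infinite $C$ you would then also need a finite-reduction step like the paper's.
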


In addition to fully resolving the bounds for the $\alpha$-Droop core in the special $N \subseteq C$ case, our results have several interesting corollaries. In particular, our result resolves the $\beta$-plurality problem \citep{aronov2020beta}, which is precisely the problem of finding an approximate Droop core clustering when $k=1$ (see \citep{kellerhals2026proportional} for details).

The proof proceeds by first defining a notion of \emph{fractional} clusterings and a fractional version of the core. We use a classical theorem by \citet{scarf1967core} to show that for any clustering instance, a fractional clustering in the (1-)core always exists. We then use a simple greedy algorithm to turn this fractional solution into an integral clustering in the $2$-Droop core.

The proof of \Cref{thm:2-Droop-core} and the proofs of all supporting lemmas were fully generated by $\mathtt{ChatGPT}$-$\mathtt{5.6}$-$\mathtt{Sol}$. Specifically, when the authors first asked the model to solve the problem in \Cref{thm:2-Droop-core}, it was not able to produce an answer, even after many rounds of thinking. However, when prompted instead to solve the $\beta$-plurality problem (the special case of \Cref{thm:2-Droop-core} when $k = 1$), the model produced a simple proof showing $2$-Droop core existence in the $k=1$ regime. This proof relied on setting up and solving a linear program, whose optimal solution could be rounded into the selection of a single center achieving the $2$-Droop core. The model was then prompted again to solve the problem for general $k$, this time together with this new solution for $k=1$ and a query to consider a similar strategy of finding a nice object that resembles exact-core and then rounding to a $2$-Droop core. This time, the model was able to find a fractional solution via Scarf's theorem, leading to the proof that appears in this paper. The authors have heavily rewritten this proof for clarity, but the core arguments remain the same.

\subsection{Related Work}\label{sec:related-work}

After \citet{ChenEtal2019} introduced proportional fairness for centroid clustering and showed that their \emph{Greedy Capture} algorithm always obtains a $(1+\sqrt{2})$-core, subsequent work identified settings in which stronger guarantees are possible. In particular, \citet{MichaShah2020} showed that Greedy Capture achieves a factor of $2$ in Euclidean spaces under the $\ell_2$ norm, and that an exact core clustering exists for tree metrics.
Recently, \citet{cookson2026improved} improved the lower bound for general metric spaces to $2.1508$. Their construction, however, relies on candidate sets that do not include all the agent locations. In contrast, we focus on the natural setting $N\subseteq C$, where every agent is an admissible center, and show that the previously known lower bound of $2$ is tight.

Several recent works place proportional clustering in the broader context of proportional representation. \citet{KellerhalsPeters2024} relate proportional clustering to notions from multiwinner voting and to individual fairness in clustering, and introduce metric analogues of justified-representation axioms. \citet{AzizEtal2024} propose \emph{Proportionally Representative Fairness} (PRF), which requires large and cohesive groups to receive a proportional number of nearby representatives. The core and these other notions of fairness in clustering have inspired applications such as representative committee selection and sortition \citep{ebadian2024boosting}, as well as fair facility and transit-stop placement \citep{aziz2026fair}.

The single-center case has also been studied independently through the $\beta$-plurality problem. \citet{aronov2020beta} introduced $\beta$-plurality points and established tight or near-tight guarantees for several Euclidean settings. \citet{filtser2020plurality} extended this study to arbitrary metric spaces, obtaining the same bounds of $[2,1+\sqrt{2}]$ that hold for the setting of general $k$. More recently, \citet{kellerhals2026proportional} observed that $\beta$-plurality is precisely the $k=1$ case of proportional clustering under the Droop quota, and further investigated this connection in the ordinal model.

Proportionality has also been studied for clustering models in which agents care about the other members of their cluster rather than only about a selected center. \citet{caragiannis2024proportional} initiated the systematic study of proportional fairness in the \emph{non-centroid} setting, considering both core stability and the weaker notion of fully justified representation. When each agent's loss is based on the agent's distance to the farthest other agent in the same cluster, they obtain constant-factor guarantees by adapting ideas from Greedy Capture. \citet{bredereck2025core} subsequently provided improved lower bounds for core approximations in this model. \citet{cookson2025unifying} proposed a framework encompassing both centroid and non-centroid clustering, clarifying which guarantees and algorithmic ideas extend between the two settings.

Our proof uses a different approach from the geometric and greedy arguments underlying most previous results on proportional clustering. It relies on the seminal theorem of \citet{scarf1967core}, which guarantees a nonempty core for balanced non-transferable-utility games. Scarf's theorem, as well as the combinatorial lemma underlying it, have previously been used to establish fractional stability in several discrete coalition-formation settings. In particular, \citet{aharoni2003} use Scarf's lemma to establish the existence of fractional stable matchings in hypergraphs, while \citet{biro2016} develop a fractional-core interpretation for capacitated NTU games and apply it to stable matching problems. Our argument follows a related philosophy by first establishing exact core stability for a fractional clustering and then exploiting the metric structure to round it to an integral solution, losing only a factor of~$2$.

\section{Preliminaries}\label{sec:prelims}
For a positive integer $t$, denote $[t] = \{1,2, \ldots, t\}$.
For a vector $\mathbf{z} \in \mathbb{R}^E$ indexed by a finite set $E$ and a subset $A \subseteq E$, let $\mathbf{z}(A) = \sum_{a \in A}z_a$.
For a finite set $A$ and a nonnegative integer $r$, let $\binom{A}{r}=\{B\subseteq A:|B|=r\}$.

\paragraph{Clustering instances.}
A traditional \emph{clustering instance} $I=(N,C,d,k)$ consists of a set of $n$ \emph{agents} $N = [n]$, a set of \emph{centers} $C$, and a \emph{pseudometric} $d\colon(N\cup C)\times(N\cup C)\to\mathbb{R}_{\ge 0}$ satisfying the following metric axioms for all $x,y,z \in N \cup C$:
\begin{itemize}
	\item \textbf{Reflexivity:} $d(x,x) = 0$
	\item \textbf{Symmetry:} $d(x,y) = d(y,x)$
	\item \textbf{Triangle inequality:} $d(x,z) \leq d(x,y) + d(y,z)$
\end{itemize}
\noindent
The instance also includes a positive integer $k\le |C|$ specifying the maximum number of centers that may be \emph{chosen}.

A \emph{clustering} is a subset $X\subseteq C$ with $1 \leq |X| \leq k$. For agent $i\in N$, the \emph{loss} under $X$ is $\ell_i(X)=\min_{c\in X}d(i,c)$ ($i$'s distance to the closest center in the clustering).

We will augment the traditional clustering instance by primarily considering \emph{weighted clustering instances}, used by \citet{cookson2026improved} in the clustering setting. A weighted clustering instance takes a traditional clustering instance and adds a weight function $w\colon N\to\mathbb{Q}_{>0}$ such that $\sum_{i \in N}w(i) = 1$. Further, for any $S \subseteq N$, we write $w(S) = \sum_{i \in S}w(i)$. A weighted instance can be thought of as representing a standard clustering instance where multiple agents may reside at the same point. The weight $w(i)$ of an agent $i \in N$ represents the total fraction of the agent population residing at that point. A weighted instance can be translated back into an unweighted one by scaling to a common denominator and adding the appropriate number of copies according to the weight of each agent. Weighted instances not only provide a more compact representation but also allow each point in the metric space to be represented by a unique element, which is particularly helpful for exposition. In the weighted metric setting, we can, without loss of generality, assume that our distance metric $d$ adheres to a strengthened version of the reflexivity axiom:
\begin{itemize}
	\item \textbf{Identity of indiscernibles:} $d(x,y) = 0 \iff x = y$
\end{itemize}

We no longer need multiple copies of each agent at a single point in the space, since we can simply represent the same information with a single agent who has a greater weight. Notably, this also means that if there is some $i \in N$ and $c \in C$ such that $d(i,c) = 0$, then it must be the case that $i = c$; that is, they must be the same element\footnote{Note that this also implies that for any $c,c' \in C$ with $d(c,c') = 0$, we must have $c = c'$. The traditional clustering setting allows for multiple potential centers to be located at the same place in the metric space, and for the returned clustering to have multiple such centers. Condensing all such centers into a single element in our model is without loss of generality for the existence of the fairness notions we consider. With loss functions that rely on an agent's distance to their closest center, it is never beneficial to select two centers in the exact same location.}. This is intentional. We also allow elements representing agents to serve as centers. This allows us to characterize instances as having $N \subseteq C$ or $N = C$, among other things, making exposition much simpler. We adopt weighted instances as the default throughout this paper and write them as $I=(N,C,d,k,w)$.

Additionally, throughout this paper, when considering instances with $N \subseteq C$, we assume that $k < |N|$. This is without loss of generality. If $|N| \le k \le |C|$, we can choose $X=N$ as our final clustering. Then $\ell_i(X)=0$ for every agent $i \in N$, so this clustering trivially satisfies all the fairness properties we consider in this paper.

\paragraph{The (Droop) core and proportional fairness.}
Under the \emph{Hare quota}, any group of agents with total weight at least $1/k$ is collectively entitled to one center. A clustering $X$ is in the \emph{$\alpha$-core} (for $\alpha\ge 1$) if no entitled group can jointly $\alpha$-improve: there do not exist a center $c \in C$ and a set $S\subseteq N$ with $w(S) \ge 1/k$ such that
\[
  \alpha\cdot d(i,c) < \ell_i(X)\quad\text{for all }i\in S.
\]
We say such a pair $(S,c)$ is a \emph{blocking pair} and that the agents in $S$ \emph{$\alpha$-deviate} (from $X$) \emph{to} $c$.

Under the \emph{Droop quota}, a group of agents whose total weight exceeds $1/(k+1)$ is entitled to one center. The \emph{$\alpha$-Droop core} is defined analogously to the $\alpha$-core but with the Droop quota replacing the Hare quota. 
A clustering in the $\alpha$-Droop core is also in the $\alpha$-core.

\paragraph{NTU games and Scarf's theorem.}
A \emph{cooperative game with non-transferable utility} (an NTU game) $(N,V)$ consists of a set of \emph{agents} $N=[n]$ and a map $V$ that assigns to each nonempty coalition $\emptyset \ne S \subseteq N$ a \emph{payoff set} $V(S) \subseteq \mathbb{R}^{S}$ satisfying the following conditions:
\begin{enumerate}[label={(\arabic*)}]
	\item $V(S)$ is a nonempty, proper, and closed subset of $\mathbb{R}^S$,
	\item $V(S)$ is comprehensive, i.e., if $\mathbf{y} \in V(S)$ and $\mathbf{z} \le \mathbf{y}$ componentwise, then $\mathbf{z} \in V(S)$, and
	\item $V(S)$ is bounded above, i.e., there exists a constant $M$ such that $x_i \leq M$ for all $\mathbf{x} \in V(S)$ and $i \in S$.
\end{enumerate}

For each coalition of agents $S$, $V(S)$ represents the utility vectors that the members of $S$ can obtain on their own. For each such utility vector $\mathbf{y} \in V(S)$, $y_i$ is the utility obtained by $i \in S$. $V(N)$ represents the set of utility vectors that can be obtained by the \emph{grand coalition} (the entire group of agents) $N$. An outcome $\mathbf{x} \in V(N)$ is \emph{blocked} by a coalition $S$ if there exists $\mathbf{y} \in V(S)$ such that $y_i > x_i$ for every $i \in S$. An outcome not blocked by any coalition is said to be in the \emph{core}. The goal of an NTU game is to find a point in $V(N)$ that lies in the core. We call such points \emph{core points}. If at least one core point exists for a given NTU game $(N,V)$, then we say that the core of that game is \emph{nonempty}.

For a coalition $S$, let $\widetilde{V}(S) = V(S) \times \mathbb{R}^{N \setminus S}$. Intuitively, this represents all utility vectors in which the agents in $S$ obtain some utility vector from $V(S)$, while the agents in $N \setminus S$ receive arbitrary utilities. The set of core points in $V(N)$ is exactly

\[
V(N) \setminus \bigcup_{\emptyset \ne S \subseteq N} \operatorname{int}\widetilde{V}(S),
\]
where $\operatorname{int}\widetilde{V}(S)$ denotes the \emph{interior} of $\widetilde{V}(S)$. Due to the structure of $\widetilde{V}(S)$, a vector $\mathbf{x} \in \mathbb{R}^N$ belongs to $\operatorname{int}\widetilde{V}(S)$ if and only if there exists a vector $\mathbf{z} \in \widetilde{V}(S)$ such that $z_i > x_i$ for all $i \in S$.
Thus, it is quite easy to see that a vector $\mathbf{x} \in V(N)$ is blocked by a coalition $S$ if and only if it is in $\operatorname{int}\widetilde{V}(S)$.

Scarf's seminal theorem for NTU games \citep{scarf1967core} provides a sufficient condition for $V(N)$ to admit at least one utility vector in the core.

A family $\mathcal{S} \subseteq 2^N \setminus \{\emptyset\}$ of nonempty coalitions is called \emph{balanced} if there exist nonnegative weights $(\lambda_S)_{S \in \mathcal{S}}$ such that for each $i \in N$, we have $\sum_{S \in \mathcal{S} : i \in S}\lambda_S = 1$. An NTU game $(N,V)$ is \emph{balanced} if for every balanced family $\mathcal{S}$ of coalitions, $\bigcap_{S \in \mathcal{S}} \widetilde{V}(S) \subseteq V(N)$.

Scarf's theorem states the following:
\begin{theorem}[\citeauthor{scarf1967core},~\citeyear{scarf1967core}]\label{thm:scarf}
	A balanced NTU game has a nonempty core.
\end{theorem}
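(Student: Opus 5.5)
The statement immediately above is Scarf's theorem (\Cref{thm:scarf}), a classical result. Rather than reproduce Scarf's original pivoting argument, I would derive it from the KKMS lemma of Shapley, the balanced-family version of the KKM lemma. It reads as follows. Let $\Delta$ be the simplex on vertex set $N$, and for $T \subseteq N$ let $\Delta^T$ be its face spanned by $T$. Suppose closed sets $C_S \subseteq \Delta$, one for each nonempty $S \subseteq N$, satisfy $\Delta^T \subseteq \bigcup_{\emptyset \ne S \subseteq T} C_S$ for every nonempty $T$. Then some point $t^\ast$ lies in $\bigcap_{S \in \mathcal{S}} C_S$ for a balanced family $\mathcal{S}$. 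I would take this lemma as the combinatorial engine, proved via Sperner-type labelling or via Scarf's lemma. My task is to encode the game into such a cover.

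\textbf{Setup.} Let $U = \bigcup_{\emptyset \ne S \subseteq N} \widetilde{V}(S)$. This set is closed, since it is a finite union of closed sets, and it is comprehensive. By condition (3), fix $M$ with $y_i \le M$ for every $\mathbf{y} \in V(S)$ and every $S$. Also fix a vector $\mathbf{a} \in V(N)$, which exists by condition (1), and set $m = \min_i a_i$. For $t \in \Delta$, define the ray
\[
\mathbf{p}_t(\lambda) = \lambda \mathbf{1} + K(\mathbf{1}/n - t), \qquad \lambda \in \mathbb{R},
\]
with $K$ chosen large (say $K > n(M - m + 1)$). Let $\lambda(t) = \sup\{\lambda : \mathbf{p}_t(\lambda) \in U\}$ and $\mathbf{x}(t) = \mathbf{p}_t(\lambda(t))$.

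\textbf{Properties of $\mathbf{x}(t)$.} Comprehensiveness and the bounds $M$ and $m$ show that $\lambda(t)$ is finite. Continuity of $t \mapsto \lambda(t)$ follows from closedness of $U$ together with the strict monotonicity of the ray in every coordinate. Then $\mathbf{x}(t)$ lies on the boundary of $U$. Hence it belongs to some $\widetilde{V}(S)$, but it is not in $\operatorname{int} U$, and therefore it is not in $\operatorname{int} \widetilde{V}(S)$ for any $S$. By the characterization in the preliminaries, this means $\mathbf{x}(t)$ is blocked by no coalition.

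\textbf{The cover.} Set $C_S = \{t \in \Delta : \mathbf{x}(t) \in \widetilde{V}(S)\}$. These sets are closed by continuity, and they cover $\Delta$.

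\textbf{The face condition.} This is the step I expect to be the main obstacle. For $t \in \Delta^T$ I must show $t \in C_S$ for some $S \subseteq T$. The idea is this. A coordinate $i$ with $t_i = 0$ receives the full bonus $K/n$ relative to coordinates with large $t_j$. Because $\mathbf{a} \in V(N) \subseteq U$, the value $\lambda(t)$ is bounded below. Choosing $K$ large then forces $x_i(t) > M$ for every $i \notin T$, so $\mathbf{x}(t) \notin \widetilde{V}(S)$ whenever $S \not\subseteq T$. Making this quantitative needs care, because coordinates in $T$ with small $t_j$ also receive a nearly full bonus. I would first try to handle this by comparing against the coordinate $j$ where $t_j \ge 1/|T|$. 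If that fails, I would fall back on the standard normalization trick: translate so that individually rational payoffs are $0$, and use a ray of the form $\lambda t - M'(\mathbf{1} - \text{support indicator})$.

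\textbf{Conclusion.} KKMS then gives $t^\ast$ and a balanced family $\mathcal{S}$ with $\mathbf{x}(t^\ast) \in \bigcap_{S \in \mathcal{S}} \widetilde{V}(S)$. Balancedness of the game gives $\mathbf{x}(t^\ast) \in V(N)$. Combined with the non-blocking property of $\mathbf{x}(t^\ast)$ established above, this shows $\mathbf{x}(t^\ast)$ is a core point, so the core is nonempty.
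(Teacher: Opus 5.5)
The paper does not prove this theorem; it cites it from \citet{scarf1967core}. Your reduction to Shapley's KKMS lemma is the standard alternative route, and most of it is correct: the finiteness and continuity of $\lambda(t)$, the non-blocking property of $\mathbf{x}(t)$, the closedness of the $C_S$, and the final appeal to balancedness. The genuine gap is where you suspected, in the face condition, and the ingredient you propose cannot close it. A lower bound on $\lambda(t)$ obtained from $\mathbf{a}\in V(N)$ needs $\mathbf{p}_t(\lambda)\le\mathbf{a}$ in \emph{every} coordinate. This includes each $i\notin T$, where $p_i=\lambda+K/n$. So it only gives $\lambda(t)\ge m-K/n$, i.e.\ $x_i(t)\ge m$, which does not improve as $K$ grows. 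Comparing with a coordinate $j$ where $t_j\ge 1/|T|$ does not rescue this, because non-blocking by $N$ only yields $x_l(t)\ge a_l$ for \emph{some} $l$, possibly $l=i$ itself.

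Here is a concrete failure. Take $n=2$, $V(N)=\{\mathbf{y}\in\mathbb{R}^2:\mathbf{y}\le\mathbf{0}\}$, $V(\{1\})=V(\{2\})=(-\infty,-L]$, $\mathbf{a}=\mathbf{0}$, and $M=0$. At $t=(1,0)$ the ray is $(\lambda-K/2,\lambda+K/2)$, so $\lambda(t)=\max(-K/2,\,K/2-L)$. For $2<K<L$ this gives $\mathbf{x}(t)=(-K,0)\notin\widetilde{V}(\{1\})$, so the vertex $(1,0)$ is not in $C_{\{1\}}$. Your choice $K>n(M-m+1)=2$ permits this. Since $L$ is arbitrary, no threshold depending only on $M$ and $m$ can work.

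The missing ingredient is the singleton coalitions. By the axioms, $V(\{j\})=(-\infty,v_j]$ for some $v_j\in\mathbb{R}$. The non-blocking property you already proved, applied to $S=\{j\}$, gives $x_j(t)\ge v_j$ for all $j$ and all $t$. Now fix $t\in\Delta^T$ with $T\ne N$ (for $T=N$ the condition is just the covering property). Pick $j\in T$ with $t_j\ge 1/|T|\ge 1/n$, and let $i\notin T$. The coordinates of the ray differ only through the terms $-Kt_l$, so
\[x_i(t)=x_j(t)+K(t_j-t_i)=x_j(t)+Kt_j\ge v_j+K/n.\]
Choosing $K>n\,(M-\min_j v_j)$ therefore forces $x_i(t)>M$. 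Hence $\mathbf{x}(t)\notin\widetilde{V}(S)$ whenever $S\not\subseteq T$. Since $\mathbf{x}(t)\in U$, it lies in $\widetilde{V}(S)$ for some $S\subseteq T$, which is exactly the KKMS hypothesis. With $K$ calibrated to the $v_j$ rather than to $\mathbf{a}$, the rest of your argument goes through unchanged.

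Your stated fallback would not work as a drop-in fix as written. A ray that depends on the support indicator of $t$ jumps as $t_i\to 0$, so $t\mapsto\mathbf{x}(t)$ is no longer continuous and the $C_S$ need not be closed.
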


\section{The Existence of a 2-Droop Core Clustering}\label{sec:2-Droop-core}

We will now prove \Cref{thm:2-Droop-core} by showing how Scarf's theorem can be leveraged to establish the existence of a clustering in the $2$-Droop core for any proportionally fair clustering instance $I=(N,C,d,k,w)$ with $N \subseteq C$. We first construct an NTU game that corresponds to a fractional version of proportional clustering. We then apply Scarf's theorem to show that a fractional clustering in a newly defined fractional version of the core exists. Finally, we round this to an integral clustering using only agent centers, yielding a clustering in the $2$-Droop core.

Throughout the section, we assume that $C$ is finite for simplicity. We remove this assumption afterward.

\paragraph{Finding a fractional Droop core clustering.}\label{subsec:exact-core}
We define a \emph{fractional clustering} to be a vector $\mathbf{x} \in \mathbb{R}^C_{\ge 0}$ with $\mathbf{x}(C) \le 1$. $\mathbf{x}(C)=1$ will correspond to a clustering of size $k$. Note that an integral clustering $X \subseteq C$ with $|X| \leq k$ can be embedded into a fractional clustering by setting $x_c = \frac{1}{k}$ for each $c \in X$ and $x_c=0$ otherwise.

For a coalition $S$, let $\mathcal{X}(S)$ be the set of 
fractional clusterings with which $S$ may deviate:
\[
\mathcal{X}(S) = \left\{ \mathbf{x} \in \mathbb{R}^C_{\ge 0} : \mathbf{x}(C) \le w(S) \right\}.
\]

Intuitively, since a fractional clustering can have total mass $1$, any coalition of agents is allowed to deviate with its own fractional clustering whose total mass is proportional to its weight $w(S)$. Equivalently, since mass $\frac{1}{k}$ represents one center, a coalition of weight $\frac{1}{k}$ has enough mass to deviate with one center under the corresponding integral interpretation.

Let $\delta = \min \{ w(S) : S\subseteq N, w(S) > \frac{1}{k+1}\}$. Hence, a group of agents $S \subseteq N$ qualifies as a potential deviating coalition under the Droop quota if and only if $w(S) \ge \delta$.
Fix a constant $D > \max_{i \in N, c \in C}d(i,c)$. For each agent $i \in N$ and $\mathbf{x} \in \mathbb{R}^C$, let
\[
R_i(\mathbf{x}) = \min\left(\left\{ d(i,c): c \in C, \mathbf{x}(\{c' \in C: d(i,c') \le d(i,c)\}) \ge \delta \right\} \cup \{D\}\right).
\]
That is, if $\mathbf{x}$ is a fractional clustering with $\mathbf{x}(C) \ge \delta$, $R_i(\mathbf{x})$ denotes the smallest radius $r$ such that the ball centered at $i$ with radius $r$ includes at least $\delta$ of the mass of $\mathbf{x}$. If $\mathbf{x}(C) < \delta$, we let $R_i(\mathbf{x}) = D$, which by definition is larger than the distance of $i$ to any possible center. We treat $R_i$ as the fractional version of agent $i$'s loss function.

We will leverage Scarf's theorem to show the following:
\begin{lemma}\label{thm:fractional-core}
	There exists a fractional clustering $\mathbf{x}^* \in \mathbb{R}^C_{\ge 0}$ such that for all nonempty $S \subseteq N$ and $\mathbf{y} \in \mathcal{X}(S)$, $R_i(\mathbf{y}) \geq R_i(\mathbf{x}^*)$ for some $i \in S$.
\end{lemma}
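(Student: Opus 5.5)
We will apply Scarf's theorem (\Cref{thm:scarf}) to an NTU game whose utilities are the negated fractional losses. For each nonempty $S \subseteq N$, let
\[
V(S) = \left\{ \mathbf{u} \in \mathbb{R}^S : \exists\, \mathbf{y} \in \mathcal{X}(S) \text{ with } u_i \le -R_i(\mathbf{y}) \text{ for all } i \in S \right\}.
\]
The plan is to verify that $(N,V)$ is a balanced NTU game, take a core point $\mathbf{u}^*$, and read off $\mathbf{x}^*$ as a fractional clustering in $\mathcal{X}(N)$ that witnesses $\mathbf{u}^* \in V(N)$.

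\textbf{Step 1: the axioms.} Nonemptiness holds because $\mathbf{0} \in \mathcal{X}(S)$. Comprehensiveness is built into the definition. Boundedness above holds since $R_i \ge 0$ gives $u_i \le 0$. Properness follows from boundedness above. For closedness, note that $R_i(\mathbf{y})$ always lies in the finite set $\{d(i,c) : c \in C\} \cup \{D\}$, since $C$ is finite. Hence the vector $(-R_i(\mathbf{y}))_{i \in S}$ ranges over finitely many values as $\mathbf{y}$ ranges over $\mathcal{X}(S)$. So $V(S)$ is a finite union of closed lower orthants, and is therefore closed.

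\textbf{Step 2: balancedness (the main step).} Let $\mathcal{S}$ be balanced with weights $(\lambda_S)_{S \in \mathcal{S}}$, and let $\mathbf{u} \in \bigcap_{S \in \mathcal{S}} \widetilde{V}(S)$. For each $S \in \mathcal{S}$, pick $\mathbf{y}^S \in \mathcal{X}(S)$ with $u_i \le -R_i(\mathbf{y}^S)$ for all $i \in S$. Set $\mathbf{x} = \sum_{S \in \mathcal{S}} \lambda_S \mathbf{y}^S \ge 0$.

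First, $\mathbf{x} \in \mathcal{X}(N)$, since
\[
\mathbf{x}(C) \le \sum_S \lambda_S w(S) = \sum_{i \in N} w(i) \sum_{S \ni i} \lambda_S = 1 .
\]

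Second, fix $i$ and let $r = -u_i$. We need $R_i(\mathbf{x}) \le r$.
\begin{itemize}
\item If $r \ge D$, this holds because $R_i \le D$ always.
\item Otherwise, for every $S \ni i$ we have $R_i(\mathbf{y}^S) \le r < D$. So $\mathbf{y}^S$ places mass at least $\delta$ on the ball $B = \{c : d(i,c) \le r\}$. Since $\sum_{S \ni i} \lambda_S = 1$, we get $\mathbf{x}(B) \ge \sum_{S \ni i} \lambda_S \mathbf{y}^S(B) \ge \delta$.
\end{itemize}
In the second case, choose $c \in B$ with $x_c > 0$ maximizing $d(i,c)$. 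Such a $c$ exists because $\delta > 0$. The ball of radius $d(i,c) \le r$ around $i$ contains all of $\mathbf{x}(B) \ge \delta$, so $R_i(\mathbf{x}) \le d(i,c) \le r$.

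Thus $u_i \le -R_i(\mathbf{x})$ for all $i$, which gives $\mathbf{u} \in V(N)$. The step I expect to need the most care is exactly this one: the mixture $\mathbf{x}$ must serve each agent $i$ as well as every coalition containing $i$ does. This rests on the monotonicity of $R_i$ and on the fact that the $\lambda_S$ over $S \ni i$ sum to exactly $1$, so the $\delta$-mass inside the ball is preserved.

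\textbf{Step 3: conclusion.} By \Cref{thm:scarf}, there is a core point $\mathbf{u}^* \in V(N)$, witnessed by some $\mathbf{x}^* \in \mathcal{X}(N)$ with $u^*_i \le -R_i(\mathbf{x}^*)$ for all $i$. Suppose some nonempty $S$ and $\mathbf{y} \in \mathcal{X}(S)$ had $R_i(\mathbf{y}) < R_i(\mathbf{x}^*)$ for all $i \in S$. Then $\mathbf{v} = (-R_i(\mathbf{y}))_{i \in S} \in V(S)$ satisfies
\[
v_i > -R_i(\mathbf{x}^*) \ge u^*_i \quad \text{for all } i \in S,
\]
so $S$ blocks $\mathbf{u}^*$, a contradiction. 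Hence $\mathbf{x}^*$ is the desired fractional clustering, and it satisfies $\mathbf{x}^*(C) \le 1$.
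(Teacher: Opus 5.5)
Your proof is correct and takes essentially the same route as the paper. It uses the same NTU game, with utilities $-R_i$ instead of the paper's $D-R_i$, which is only a shift; the same finiteness argument for closedness; the same convex-combination argument for balancedness; and the same read-off of $\mathbf{x}^*$ from a Scarf core point. Two further differences are small clarifications rather than a new approach: you choose a farthest point $c\in B$ to pass explicitly from $\mathbf{x}(B)\ge\delta$ to $R_i(\mathbf{x})\le r$, a step the paper leaves implicit, and you invoke the blocking definition directly instead of the paper's $\epsilon$-perturbation.
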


The proof of the existence of $\mathbf{x}^*$ is the most technical step of this paper. We defer it to the end and first show that, assuming the existence of such a fractional clustering $\mathbf{x}^*$, the existence of an integral clustering in the $2$-Droop core follows quite easily.

\begin{lemma}\label{thm:rounding}
	For every weighted clustering instance $I=(N,C,d,k,w)$ with $N \subseteq C$ and finite $C$, there exists a clustering $X \subseteq N$ that is in the $2$-Droop core.
\end{lemma}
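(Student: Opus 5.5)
We plan to round the fractional clustering $\mathbf{x}^*$ from \Cref{thm:fractional-core} greedily, using balls around agents of radius $R_i(\mathbf{x}^*)$. Write $r_i = R_i(\mathbf{x}^*)$ for each $i \in N$.

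\textbf{Step 1: every $r_i$ is a genuine radius.} We first check that $\mathbf{x}^*(C) \ge \delta$, so that $r_i < D$ and the ball of radius $r_i$ around $i$ carries mass at least $\delta$. Suppose instead that $\mathbf{x}^*(C) < \delta$. Then $r_i = D$ for all $i$. Apply \Cref{thm:fractional-core} with $S = N$ and $\mathbf{y}$ putting mass $1 = w(N) \ge \delta$ on a single center. This gives $R_i(\mathbf{y}) < D = r_i$ for every $i$, a contradiction. We also use that $\mathbf{x}^*(C) \le 1$, since $\mathbf{x}^*$ is a fractional clustering.

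\textbf{Step 2: greedy selection.} Let $B_i = \{c \in C : d(i,c) \le r_i\}$, so $\mathbf{x}^*(B_i) \ge \delta$. Process the agents in nondecreasing order of $r_i$. Add agent $i$ to $X$ (as a center, which is allowed since $N \subseteq C$) if $d(i,j) > r_i + r_j$ for every previously selected $j$.
\begin{itemize}
\item The first agent is always selected, so $X \neq \emptyset$ and $X \subseteq N$.
\item For selected $i \neq j$, the triangle inequality shows that $B_i$ and $B_j$ are disjoint.
\item Hence $|X|\,\delta \le \sum_{i\in X}\mathbf{x}^*(B_i) \le \mathbf{x}^*(C) \le 1$, so $|X| \le 1/\delta$.
\item Since $\delta > 1/(k+1)$, this gives $|X| < k+1$, i.e., $|X| \le k$.
\end{itemize}

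\textbf{Step 3: every agent is served within $2r_j$.} Take any agent $j$. If $j \in X$, then $\ell_j(X) = 0$. Otherwise $j$ was rejected because of some earlier-selected $i$ with $r_i \le r_j$ and $d(i,j) \le r_i + r_j$. Therefore $\ell_j(X) \le d(i,j) \le 2r_j$.

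\textbf{Step 4: no blocking pair.} Suppose $(S,c)$ is a blocking pair for the $2$-Droop core. Then $w(S) > 1/(k+1)$, hence $w(S) \ge \delta$, and $2d(j,c) < \ell_j(X) \le 2r_j$ for all $j \in S$. Define $\mathbf{y}$ by $y_c = \delta$ and zero elsewhere. Then $\mathbf{y}(C) = \delta \le w(S)$, so $\mathbf{y} \in \mathcal{X}(S)$. Moreover $R_j(\mathbf{y}) = d(j,c) < r_j = R_j(\mathbf{x}^*)$ for every $j \in S$. This contradicts \Cref{thm:fractional-core}.

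The main point needing care is the disjointness/counting in Step 2. It relies on choosing the strict separation $d(i,j) > r_i + r_j$, which makes the closed balls disjoint, and on the strict inequality $\delta > 1/(k+1)$. This is exactly why the Droop quota yields at most $k$ centers.
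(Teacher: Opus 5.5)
Your proof is correct and follows essentially the same route as the paper: greedy selection of agents in nondecreasing order of $R_i(\mathbf{x}^*)$ with pairwise disjoint balls of mass at least $\delta$ to get $|X|\le k$, the triangle inequality to get $\ell_j(X)\le 2R_j(\mathbf{x}^*)$, and the single-center deviation $y_c=\delta$ to contradict \Cref{thm:fractional-core}. The only differences are cosmetic. You derive $\mathbf{x}^*(C)\ge\delta$ from the core property itself, whereas the paper pads $\mathbf{x}^*$ to total mass $1$. You also use the separation test $d(i,j)>r_i+r_j$ instead of $B_i\cap B_j=\emptyset$, which simply moves the triangle-inequality step from the factor-$2$ bound into the disjointness argument.
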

\begin{proof}
	Fix a clustering instance, and let $\mathbf{x}^*$ be a fractional clustering whose existence is guaranteed by \Cref{thm:fractional-core}. If $\mathbf{x}^*(C) < 1$, then pad $\mathbf{x}^*$ arbitrarily so that $\mathbf{x}^*(C)=1$. This can only weakly decrease $R_i(\mathbf{x}^*)$ for every $i \in N$ and does not affect the following proof.

	For each agent $i \in N$, let $B_i = \{c \in C: d(i,c) \le R_i(\mathbf{x}^*)\}$.
	Sort agents in increasing order of $R_i(\mathbf{x}^*)$; without loss of generality, let $R_1(\mathbf{x}^*) \le R_2(\mathbf{x}^*) \le \cdots \le R_n(\mathbf{x}^*)$.

	We will construct an integral clustering $X \subseteq N \subseteq C$ that consists solely of points in the metric space where an agent resides. We construct $X$ greedily using the following procedure.
	
	Start with $X \gets \emptyset$.
	In order of $i = 1, \ldots, n$, if $B_i \cap \left(\bigcup_{j \in X}B_j\right) = \emptyset$, then $X \gets X \cup \{i\}$.

	First, we show that $|X| \leq k$. This follows from the fact that $B_i$ and $B_j$ are disjoint for all distinct $i,j \in X$. By the definition of $B_i$, we have $\mathbf{x}^*(B_i) \geq \delta$. Thus,
	\begin{align*}
		|X| \cdot \delta &\le \sum_{i \in X} \mathbf{x}^*(B_i) = \mathbf{x}^*\left(\bigcup_{i \in X}B_i\right) \le \mathbf{x}^*(C) \le 1.
	\end{align*}
	Rearranging, we get $|X| \le {1}/{\delta} < k+1$.

	Next, we prove that $X$ is in the $2$-Droop core.
	First, we claim that $\ell_i(X) \le 2 R_i(\mathbf{x}^*)$ for all $i \in N$. If $i \in X$, then the result is trivial as $\ell_i(X)=0$ and $R_i(\mathbf{x}^*) \geq 0$. If $i \notin X$, then this means that there exists $j<i$ such that $j \in X$ and $B_i \cap B_j \ne \emptyset$. Take $c \in B_i \cap B_j$. Then,
	\begin{align*}
		d(i,j) \le d(i,c) + d(j,c) \le R_i(\mathbf{x}^*) + R_j(\mathbf{x}^*) \le 2 R_i(\mathbf{x}^*).
	\end{align*}
	
	To conclude that $X$ is in the $2$-Droop core, assume to the contrary that there exists a coalition $S \subseteq N$ and a center $c' \in C$ with $w(S) > \frac{1}{k+1}$ and $2 d(i,c') < \ell_i(X)$ for all $i \in S$. Then, by the definition of $\delta$, $w(S) \ge \delta$. Consider the fractional clustering $\mathbf{y}$ such that $y_{c'} = \delta$ and all other components receive mass $0$. Then $\mathbf{y}(C) = \delta \le w(S)$, so $\mathbf{y} \in \mathcal{X}(S)$. For each $i \in S$,
	\[
	R_i(\mathbf{y})=d(i,c')<\frac{1}{2}\ell_i(X)\le R_i(\mathbf{x}^*),
	\]
	which contradicts \Cref{thm:fractional-core}.
\end{proof}

\Cref{thm:fractional-core} and \Cref{thm:rounding} show that for any clustering instance with $N \subseteq C$ and $C$ finite, a clustering in the $2$-Droop core always exists. To prove \Cref{thm:2-Droop-core}, it remains to show that this still holds when $C$ is infinite.

\begin{proof}[Proof of \Cref{thm:2-Droop-core}]
	\Cref{thm:fractional-core} and \Cref{thm:rounding} establish that this theorem holds for every instance with a finite $C$.

	For the sake of contradiction, assume that there is a weighted instance $I=(N,C,d,k,w)$ with $|C| = \infty$ that admits no clustering $X \subseteq N$ in the $2$-Droop core.

	There are $\sum_{\ell=1}^k \binom{n}{\ell}$ subsets $X \subseteq N$ of size $1 \le |X| \le k$. For each $X\in \bigcup_{\ell=1}^k\binom{N}{\ell}$, choose a blocking pair $(S_X,c_X)$ with $S_X\subseteq N$ and $c_X\in C$ such that $w(S_X)>\frac{1}{k+1}$ and $2d(i,c_X)<\ell_i(X)$ for every $i\in S_X$. Let
	\[
		C'=N\cup\{c_X:X\in{\textstyle \bigcup_{\ell=1}^k}\tbinom{N}{\ell}\}.
	\]
	The set $C'$ is finite. Thus, the weighted instance $(N,C',d,k,w)$ must admit a clustering $X^* \subseteq N$ with $|X^*| \le k$ in the $2$-Droop core. However, $c_{X^*}\in C'$, and the pair $(S_{X^*},c_{X^*})$ blocks $X^*$, a contradiction.
\end{proof}

It remains to prove \Cref{thm:fractional-core} using Scarf's theorem.

\begin{proof}[Proof of \Cref{thm:fractional-core}]
	For each agent $i$ and $\mathbf{x} \in \mathbb{R}^C$, let $u_i(\mathbf{x}) = D - R_i(\mathbf{x})$. Since $R_i(\mathbf{x}) \in [0,D]$, we have $u_i(\mathbf{x}) \in [0, D]$.
	For each coalition $\emptyset \ne S \subseteq N$, let
	\[
	V(S) = \left\{ \mathbf{y} \in \mathbb{R}^S : \text{there exists } \mathbf{x} \in \mathcal{X}(S) \text{ such that } y_i \le u_i(\mathbf{x}), \ \forall i \in S \right\}.
	\]
	We claim that $(N,V)$ is a balanced NTU game. Fix a coalition $\emptyset \ne S \subseteq N$.
	The zero fractional clustering belongs to $\mathcal{X}(S)$ and satisfies $u_i(\mathbf{0})=0$ for every $i\in S$, so the zero vector belongs to $V(S)$. In particular, $V(S)$ is nonempty. Since $u_i(\mathbf{x}) \in [0,D]$, $V(S)$ is also proper and bounded above. For every $\mathbf{y} \in V(S)$, we have $y_i \leq D$ for all $i \in S$.
	To show that $V(S)$ is closed, observe that, since $C$ is finite, the definition of $R_i$ implies that
	\[
	R_i(\mathbf{x}) \in \{D\} \cup \{d(i,c) : c \in C\}.
	\]
	Therefore, each $u_i(\mathbf{x})=D-R_i(\mathbf{x})$ takes values in the finite set
	$U_i=\{0\}\cup\{D-d(i,c):c\in C\}$.
	Consequently, the set
	\[
	U_S=\left\{(u_i(\mathbf{x}))_{i\in S}:\mathbf{x}\in\mathcal{X}(S)\right\}
	\]
	of attainable utility profiles is finite. By the definition of $V(S)$,
	\[
	V(S)
	=
	\bigcup_{q\in U_S}
	\left\{
	\mathbf{y}\in\mathbb{R}^S:
	y_i\leq q_i \text{ for every } i\in S
		\right\}.
	\]
	For a fixed $q \in U_S$, the set

	\[\left\{
	\mathbf{y}\in\mathbb{R}^S:
	y_i\leq q_i \text{ for every } i\in S
	\right\}\]
	is clearly closed. Thus, the union over the finitely many vectors in $U_S$ is also closed. Therefore, $V(S)$ is closed.
	
	$V(S)$ is also comprehensive by definition. It remains to prove balancedness.
	Let $\mathcal{S}$ be a balanced family of coalitions with nonnegative weights $(\lambda_S)_{S \in \mathcal{S}}$. If $\mathbf{z} \in \bigcap_{S \in \mathcal{S}} \widetilde{V}(S)$, then for each $S \in \mathcal{S}$, take $\mathbf{x}^S \in \mathcal{X}(S)$ such that $z_i \le u_i(\mathbf{x}^S)$ for all $i \in S$. Let $\mathbf{x} = \sum_{S \in \mathcal{S}} \lambda_S \mathbf{x}^S$. Then,
	\begin{align*}
		\mathbf{x}(C) &= \sum_{S \in \mathcal{S}} \lambda_S \mathbf{x}^S(C) \\
		&\leq \sum_{S \in \mathcal{S}} \lambda_S w(S) \\
		&= \sum_{S \in \mathcal{S}} \lambda_S \sum_{i \in S} w(i) \\
		&= \sum_{i \in N} w(i) \sum_{S \in \mathcal{S} : i \in S} \lambda_S \\
		&= \sum_{i \in N} w(i) \\
		&= w(N) = 1.
	\end{align*}
	The inequality follows from the definition of $\mathcal{X}(S)$. Hence, $\mathbf{x} \in \mathcal{X}(N)$. We now claim that $z_i \le u_i(\mathbf{x})$ for every $i \in N$. Fix $i \in N$. Since $u_i(\mathbf{x}) \in [0,D]$, the claim is trivial if $z_i \le 0$. Assume $z_i > 0$. For any $S \in \mathcal{S}$ with $i \in S$, since $z_i \le u_i(\mathbf{x}^S) = D - R_i(\mathbf{x}^S)$, we have $R_i(\mathbf{x}^S) \le D - z_i < D$. That is, $\mathbf{x}^S(\{c \in C: d(i,c) \le D - z_i\}) \ge \delta$. Hence,
	\begin{align*}
		\mathbf{x}(\{c \in C: d(i,c) \le D - z_i\}) &= \sum_{S \in \mathcal{S}} \lambda_S \mathbf{x}^S (\{c \in C: d(i,c) \le D - z_i\}) \\
		& \ge \sum_{S \in \mathcal{S}: i \in S} \lambda_S \delta \\
		&= \delta.
	\end{align*}
	That is, $R_i(\mathbf{x}) \le D - z_i$, i.e., $u_i(\mathbf{x}) \ge z_i$. Hence, $\mathbf{z} \in V(N)$ and the game is balanced.
	
	Applying Scarf's theorem (\Cref{thm:scarf}), there exists $\mathbf{z}^* \in V(N) \setminus \bigcup_{\emptyset \ne S \subseteq N} \operatorname{int}\widetilde{V}(S)$. By the definition of $V(N)$, there must exist some $\mathbf{x}^* \in \mathcal{X}(N)$ such that $z^*_i \le u_i(\mathbf{x}^*)$ for all $i \in N$.
	We claim that $\mathbf{x}^*$ satisfies the requirement of the lemma. Assume to the contrary that there exists $(S,\mathbf{y})$ with nonempty $S \subseteq N$, $\mathbf{y} \in \mathcal{X}(S)$, and $R_i(\mathbf{y}) < R_i(\mathbf{x}^*)$ for every $i \in S$. Then by definition, $u_i(\mathbf{y}) > u_i(\mathbf{x}^*) \ge z_i^*$ for all $i \in S$. Let $\epsilon = \frac{1}{2} \min_{i \in S} (u_i(\mathbf{y}) - z_i^*) > 0$. Then the vector $\mathbf{z}'$ defined by $z'_i = z^*_i + \epsilon$ for all $i \in S$ satisfies $z'_i < u_i(\mathbf{y})$ for every $i \in S$. Hence, we have $\mathbf{z}' \in V(S)$. Since $z'_i$ is strictly larger than $z^*_i$ on every coordinate $i \in S$, the characterization of $\operatorname{int}\widetilde{V}(S)$ above contradicts the fact that $\mathbf{z}^* \not\in \operatorname{int}\widetilde{V}(S)$.
	\end{proof}

\section{Discussion}

\paragraph{Further generalizations.} While \Cref{thm:2-Droop-core} completely settles the question of the optimal core bounds for the $N \subseteq C$ special case of proportionally fair clustering, several interesting generalizations of the model remain open. One of the most compelling is the sortition model of \citet{ebadian2024boosting}. In this model, we have $N = C$, and the goal is to find a lottery over clusterings that provides a core-approximation guarantee for every realized clustering while ensuring that each agent is selected with probability $\nicefrac{k}{|N|}$. Interestingly, \citet{ebadian2024boosting} showed a lower bound of $2$ on the ex post core-approximation factor for any such lottery. The techniques used in this paper do not appear to extend easily to establishing a lottery over clusterings that makes this lower bound tight.

\paragraph{Separation between $N \subseteq C$ and the general case.} While the $N \subseteq C$ case is now fully resolved, the general case remains open. As stated previously, the recent lower bounds of \citet{cookson2026improved} place the known bounds for the general case at $[2.1508, 1 + \sqrt{2}]$. This is especially interesting because it establishes a separation between the general case and the $N \subseteq C$ case. The condition $N \subseteq C$ provides enough structure to the instance that a $2$-approximation is always possible, while this is not the case in the general setting. Resolving the gap in the general setting appears to be the largest theoretical open problem remaining in the study of the core in clustering.

\paragraph{Polynomial-time computation.} \Cref{thm:2-Droop-core} shows the existence of a clustering in the $2$-Droop core. It is easy to see that, at least when the candidate set is finite, if we are given the fractional clustering vector $\mathbf{x}^*$ whose existence is established using Scarf's theorem, then $\mathbf{x}^*$ can be rounded to the final integral clustering in polynomial time. However, it is known that, in general, finding the utility vector implied by Scarf's theorem is PPAD-complete \citep{kintali2013reducibility}. Thus, this does not give a clear path to determining whether the fractional clustering $\mathbf{x}^*$ can be computed in polynomial time in this setting.

\bibliography{refs}

\end{document}